\theoremstyle{plain}
\newtheorem{theorem}{Theorem}
\newtheorem{lemma}{Lemma}
\newtheorem*{corollary}{Corollary}
\newcommand{\llaabel}[1]{\label{#1}}
\begin{document}


\title{Rotating accretion flows in $D$ dimensions: 
sonic points, critical points and photon spheres}


\author{Yasutaka Koga}
\author{Tomohiro Harada}
\affiliation{Department of Physics, Rikkyo University, Toshima, Tokyo 171-8501, Japan}


\date{\today}

\begin{abstract}
We give the formulation and the general analysis of the rotational
 accretion problem on $D$-dimensional spherical spacetime and
 investigate sonic points and critical points.
First, we construct the simple two-dimensional rotating accretion flow
 model in general $D$-dimensional static spherically symmetric spacetime and formulate the problem.
The flow forms a two-dimensional disk lying on the equatorial plane and the disk is assumed to be geometrically thin and has uniform distribution in the polar angle directions.
Analyzing the critical point of the problem, we give the conditions for the critical point and its classification explicitly and show the coincidence with the sonic point for generic equation of state (EOS).
Next, adopting the EOS of ideal photon gas to the analysis, we reveal that there always exists a correspondence between the sonic points and the photon spheres of the spacetime.
Our main result is that the sonic point of the rotating accretion flow
 of ideal photon gas must be on (one of) the unstable photon sphere(s)
 of the spacetime
in arbitrary spacetime dimensions. 
This paper extends this correspondence for spherical flows shown in the authors' previous work to rotating accretion disks.
\end{abstract}

\pacs{04.20.-q, 04.40.Nr, 98.35.Mp}

\maketitle


\tableofcontents

\section{Introduction}
\llaabel{sec:introduction}
The accretion of fluid onto objects is a basic
problem in astrophysics and
the most important issue concerning growth of stars and black holes.
In an observational view point, the accretion is considered to be responsible
for the X-ray emission due to the compression of the fluid.
This is also connected to the observations of strong gravity fields in a general relativistic context.
The accretion problem has been widely studied in the cases of Newtonian gravity~\cite{bondi}, Schwarzschild spacetime~\cite{michel}, Schwarzschild (anti-)de Sitter spacetime~\cite{mach} and generic spherically symmetric spacetime~\cite{chaverra}.
One of the interesting features is the existence of transonic flow and a sonic point (or critical point), that is, flow which experiences transition between subsonic and supersonic states and a point at which such a transition occurs.
This generally appears in accretion problems and plays key role in the analysis.
\par
A photon sphere is a sphere of spacetime on which motion of photons, or null geodesics, can take circular orbits.
In astrophysical cases, black holes usually have photon spheres near the horizon.
This structure is characteristic of strong gravitational fields and it is well known that Schwarzschild spacetime has a photon sphere on the areal radius $r=3M$ and the sphere is unstable, that is, the circular orbits on the sphere are unstable orbits.
The photon sphere has been widely studied in its various aspects;
For optical observations of black holes through background light emission, the photon sphere determines the size of the black hole shadow.
In the case of the Schwarzschild black hole for example, we can see their relation from the calculation  by Synge~\cite{synge};
Properties of gravitational waves from black holes are also closely related to the photon sphere.
It is known that the frequencies of quasinormal modes are determined by the parameters of null geodesic motions on and near the photon sphere in various situations~\cite{cardoso}~\cite{hod};
The nature of photon sphere itself and the generalization to non-spherically symmetric spacetimes are also investigated in~\cite{claudel} and the citation.
\par
In the study by Mach et al.~\cite{mach}, it was revealed that only for the case of the accretion of radiation fluid, the radius of the sonic point is $3M$.
This radius coincides with the photon sphere of the spacetime.
In the previous paper of the present authors~\cite{koga} 
(hereafter referred to as paper I), 
the analysis of the spherical accretion problem by Chaverra and
Sarbach~\cite{chaverra} is extended to general, static and spherically
symmetric spacetime in arbitrary dimensions and 
a correspondence is found in much more general situations:
On general static spherically symmetric spacetime, the sonic point of the spherical accretion flow of radiation fluid must be on the unstable photon sphere and this correspondence holds in arbitrary dimensions.
In an observational view point, this correspondence connects two independent observations,
the observation of lights from sources behind a black hole and the
observation of emission from accreted radiation fluid onto the hole,
because the size of the shadow of the hole is determined by the radius
of the photon sphere and the accreted fluid can signal the sonic point.
The stationary accretion flow 
of radiation fluid in spherically symmetric spacetime can model 
accretion onto primordial black holes in the 
radiation-dominated era of the Universe~(\cite{carr} and references therein) 
and accretion of cosmic
microwave background onto black holes in the recent phase of the 
Universe. The time reverse of this system can model the outflow of 
Hawking radiation from a black hole~\cite{carter}.
In a theoretical view point, 
the correspondence in very general stuations suggests that this is
of microphysical origin because the description of photon gas by 
radiation fluid is derived for the isotropic distribution of photons 
by purely local physics.
\par
In this paper, we show there exists the correspondence between sonic points and photon spheres in the case of rotational accretion of ideal photon gas.
In the first half, we construct our rotational accretion flow model which forms a disk on an equatorial plane of $D$-dimensional static spherical symmetric spacetime ($D\ge3$) and analyze its sonic points for general equation of state (EOS) of fluid.
The metric is given by 
\begin{equation}
\llaabel{eq:metric}
ds^2=-f(r)dt^2+g(r)dr^2+r^2d\Omega^2_{D-2}\qquad(D\ge3),
\end{equation}
where the condition $0<f,g<\infty$ is assumed and $d\Omega^2_{D-2}$ is the unit $(D-2)$-sphere metric given by
\begin{equation}
d\Omega_{D-2}^2=d\theta_1^2+\cdots+\sin^2\theta_1\cdots\sin^2\theta_{D-4}d\theta_{D-3}^2
+\sin^2\theta_1\cdots\sin^2\theta_{D-3}d\phi^2.
\end{equation}
The polar and the azimuthal angle coordinates are denoted by $\theta_i\left(i=1,...,D-3\right)$ and $\phi$, respectively.
In the latter half, applying the analysis to radiation fluid, we show there exists correspondence in the rotational accretion.
\par
We start with the construction of our accretion disk model and the formulation of the problem and present the definition of critical point and sonic point and their relation in Sec.~\ref{sec:construction}.
In Sec.~\ref{sec:criticalpoint}, we explicitly analyze the conditions for the critical point and its classification without specifying EOS.
In Sec.~\ref{sec:radiation}, Applying the EOS of the ideal photon gas (or, radiation fluid) to the analysis, we derive the conditions about the critical point and its classifications in that case.
Then, recalling the conditions for the photon spheres from paper I~\cite{koga}, we see that there also exists the correspondence between the sonic point and the photon sphere in the rotational accretion problem.
The conclusion is given in Sec.\ref{sec:conclusion}.

\section{Rotating accretion disk,  critical point and sonic point
\llaabel{sec:construction}
}
Here assuming three conservation laws and the metric (\ref{eq:metric}), we formulate the accretion problem.
The conditions for the critical point and the relation between the critical points and the sonic points are also given in the subsequent subsections.
\par
We assume the following three equations, the first law of thermodynamics, continuity equation and energy-momentum conservation with perfect fluid:
\begin{subnumcases}
{}
\llaabel{first_law}
dh=Tds+n^{-1}dp\\
\llaabel{num_conservation}
\nabla_\mu J^\mu=0\\
\llaabel{em_conservation}
\nabla_\mu T^\mu_\nu=0,
\end{subnumcases}
where $J^\mu := nu^\mu$ is the number current and $T^\mu_\nu=nhu^\mu u_\nu+p\delta^\mu_\nu$ is the energy-momentum tensor of the perfect fluid.
The quantities $h,T,s,n,p$ and $u^\mu$ represent the enthalpy per particle, the temperature,
the entropy per particle, the number density, the pressure and the
4-velocity of the fluid.

\subsection{Configuration of the accretion disk}
We assume several conditions for the accretion disk.
Although there are many other possibilities about disk thickness and the
vertical equilibrium, we follow the simplest model used by Abraham,
Bilic and Das\cite{abraham}. Although their model is based on the axial coordinate, $z$, in rotational
spacetime, however, we here choose the polar coordinate,
$\theta_{i}$, in spherical spacetime. 
In this sense, strictly speaking, our 
model is not the same as their model.
\par
We assume the following conditions for the accretion disk:
\begin{enumerate}
\item The disk lying on the equatorial plane has symmetries which respects the background geometry;
	\begin{itemize}
	\item Stationarity along the Killing vector $\xi_{(t)}=\partial_t$
	\item Rotational symmetry along the Killing vector $\xi_{(\phi)}=\partial_\phi$
	\item Reflection symmetry respective to the equatorial plane
	\end{itemize}
	\llaabel{item-cnd-sym}
\item The matter distribution is uniform in the $\theta_i$-direction: The number density $n$, the pressure $p$, the entropy and the components of the velocity $u^\mu$ are independent of $\theta_{i}$\llaabel{item-cnd-thetaindepend}
\item The disk is sufficiently thin so that 
we can ignore terms of the second or higher order of $\theta_i-\pi/2$ compared to that of the zero-th order.
\item The equilibrium between the inside and the outside of the disk
      surface in the polar direction is achieved
       by pressure of external rarefied gas.  
\end{enumerate}
The {\it equatorial plane} here means 
the 2-plane $(r,\phi)$ with all the polar angles $\theta_i=\pi/2\;
(i=1,2,...,D-3)$ and the {\it disk} consists of the 2-dimensional
 plane with the $(D-3)$-dimensional
(sufficiently small) thickness.
Note that the disk surface is a spatial ($D-2$)-space of $\theta_i=const.$ and $u^{\theta_i}=0$ everywhere in the disk as the consequence of conditions \ref{item-cnd-sym} and \ref{item-cnd-thetaindepend}.
\par
In accretion problems, we are interested in the accretion rate and it is also important for analysis of the dynamics.
Consider $t=const.$ hypersurface and the disk volume $\Sigma_D$ on
the hypersurface. Let $\Sigma_r$ be the region inside the radius $r$ in $\Sigma_D$.
The total particle number $N(t,r)$ in the volume $\Sigma_r$ at the time
$t$ is given by
\begin{equation}
N(t,r):=\int_{\Sigma_r}nu^\mu d\Sigma_\mu.
\end{equation}
Then the accretion rate $\dot{N}(t,r)$ of the particle number into the region $\Sigma_r$ is given by
\begin{equation}
\dot{N}(t,r)=-\int_{S_r}nu^r\sqrt{fg}r^{D-2}d\Omega_{D-2}
\end{equation}
where $S_r$ is the cross-section of $\Sigma_D$ with the sphere of radius $r$ and $d\Omega_{D-2}$ is the volume element of ($D-2$)-dimensional unit sphere.
From stationarity of the disk, we can prove the constancy of the accretion rate about both time $t$ and radius $r$.
Then we have the following expression for the integration of the continuity equation:
\begin{equation}
\llaabel{eq:jn-def}
j_n(r):=\int_{S_r}nu^r\sqrt{fg}r^{D-2}d\Omega_{D-2}=const.
\end{equation}
This is the constancy of the particle number flux $j_n(r)$.

\subsection{Construction of the accretion problem}
The system of Eqs.~(\ref{first_law}), (\ref{num_conservation}) and (\ref{em_conservation}) implies the adiabatic condition of the fluid, or equivalently, $u^\mu\partial_\mu s=0$.
The conditions for the disk, the constancy of the entropy in $\theta_i$-direction and the stationarity and the rotational symmetry of the matter distribution mean $\partial_{\theta_i}s=\partial_ts=\partial_\phi s=0$ and the adiabatic condition reduces to the condition $\partial_rs=0$.
Therefore the entropy of the fluid is constant over the disk volume and independent of the time.
Then we can write the enthalpy as a function of the number density,
\begin{equation}
\llaabel{enthalpy_expression}
h=h(n).
\end{equation}
The projection of the energy-momentum conservation equation $\nabla_\mu
T^\mu_\nu=0$ onto the direction of a Killing vector $\xi^\nu$ generally
gives conservation of the quantity $hu_\mu\xi^\mu$ along the fluid
motion if the matter distribution has the symmetry associated with the Killing vector:
\begin{equation}
u^\mu\nabla_\mu\left(hu_\nu\xi^\nu\right)=0
\end{equation}
The disk of our accretion problem has symmetries associated with the two Killing vectors, $\xi_{(t)}=\partial_t$ and
$\xi_{(\phi)}=\partial_\phi$.
Therefore we immediately get the following two integrals of motion,
\begin{eqnarray}
\llaabel{int_motion_t}
hu_t=const.,\\
\llaabel{int_motion_phi}
hu_\phi=const.,
\end{eqnarray}
corresponding to $\xi_{(t)}$ and $\xi_{(\phi)}$, respectively.
These are the specific energy and the specific angular momentum per particle and constant over the whole region in the disk due to the assumptions on the disk.
\par
From the assumption that $n$ and $u^r$ are independent of all the polar angle $\theta_i$, the particle number flux $j_n$ is calculated as,
\begin{eqnarray}
\llaabel{eq:jn-explicit}
j_n(r)
&=&2\pi\Theta\sqrt{fg}r^{D-2} nu^r,
\end{eqnarray}
where $2\pi\Theta$ is the ($D-2$)-dimensional solid angle subtended by the disk.
Together with Eq.~(\ref{int_motion_t})-(\ref{int_motion_phi}), we also have constancy of the energy flux and the
angular momentum flux,
\begin{eqnarray}
\llaabel{engy_flx}
j_\epsilon(r):=-2\pi\Theta\sqrt{fg}r^{D-2}nhu_tu^r=const.,\\
\llaabel{angl_momentum_flx}
j_\phi(r):=2\pi\Theta\sqrt{fg}r^{D-2}nhu_\phi u^r=const.
\end{eqnarray}
Thus, the integration of the conservation equations leads to the constancy of $j_n$, $j_\epsilon$ and $j_\phi$
\par
From the assumption that the disk is sufficiently thin, we estimate the values of the components of the 4-velocity through
\begin{eqnarray}
-1=g_{\mu\nu}u^\mu u^\nu
=-f\left(u^t\right)^2+g\left(u^r\right)^2+r^2\left(u^\phi\right)^2,
\end{eqnarray}
where it should be noted that $g_{\phi\phi}\to r^{2}$ in the limit of geometrically thin disk.
Introducing the fluid's angular velocity $\Omega_f:=u^\phi /u^t$, we have
\begin{equation}
\llaabel{eq:ut-exp}
\left(f-r^2\Omega_f^2\right)\left(u^t\right)^2=1+g\left(u^r\right)^2.
\end{equation}
We define the function $F$ by
\begin{equation}
F:=\left(\frac{j_\epsilon}{j_n}\right)^2
\end{equation}
and using Eq.~(\ref{eq:ut-exp}), we obtain
\begin{equation}
\llaabel{eq:F-calc}
F=h^2 u_t^2=h^2\left[f+fg\left(u^r\right)^2\right]\frac{f}{f-r^2\Omega_f^2}.
\end{equation}
The angular velocity $\Omega_f$ can be expressed as
\begin{equation}
\Omega_f=\frac{u^\phi}{u^t}=
-\frac{f}{r^2}\frac{u_\phi}{u_t}
=\omega fr^{-2}
\end{equation}
in the limit of geometrically thin disk.
The parameter $\omega:=j_\phi/j_\epsilon=-u_\phi/u_t$ is constant
due to Eqs.~(\ref{engy_flx})-(\ref{angl_momentum_flx}).
We can write $\left(u^r\right)^2$ in Eq.~(\ref{eq:F-calc}) as
\begin{equation}
\left(u^r\right)^2=\frac{\mu^2}{fgr^{2(D-2)}n^2},
\end{equation}
where Eq.~(\ref{eq:jn-explicit}) is used and the parameter $\mu:=j_n/2\pi\Theta$ is constant from Eq.~(\ref{eq:jn-def}).
Substituting the above results into Eq.~(\ref{eq:F-calc}), we finally get
\begin{equation}
F=h^2\left[f+\frac{\mu^2}{r^{2(D-2)}n^2}\right]\frac{1}{1-\omega^2fr^{-2}}.
\end{equation}
The function can be seen as a function of two variables, $r$ and $n$, and the conservation equations imply its constancy.
As a consequence, we have constructed the master equation of the accretion
problem by the following algebraic equation with two parameters, $\mu$ and $\omega$:
\begin{equation}
\llaabel{master_eq}
F(r,n)
=h^2(n)\left[f(r)+\frac{\mu^2}{r^{2(D-2)}n^2}\right]
\frac{1}{1-\omega^2f(r)r^{-2}}=const.
\end{equation}
The level curves satisfying the master equation on the phase space $(r,n)$ are the solution curves.
\par
Given $\mu$ and $\omega$, the function $F(r,n)$ is specified and
the constant in the rightmost side of Eq.~(\ref{master_eq}) determines the accretion flow, or equivalently, the solution curve on the phase space $(r,n)$.
Once the distribution of the number density $n$ is obtained, the equations $j_n=2\pi\Theta\mu$ and $j_\phi/j_\epsilon=\omega$ give the velocity distribution.
It is worth noting that the master equation coincides with that of the spherically symmetric accretion problem in paper I~\cite{koga} in the irrotational case $\omega=0$ and does not depend on the $(r,r)$-component of the metric $g(r)$.

\subsection{Critical point}
Here we give the definition of the critical point and its classification by reformulating the problem in terms of a dynamical system on the phase space $(r,n)$.
The analysis of this kind was introduced in an accretion problem by Chaverra and Sarbach~\cite{chaverra}.
Generally, the critical point plays an important role in accretion problems and is closely related to the sonic point of the flow.

\subsubsection{Definition of critical point}
In our accretion problem Eq.~(\ref{master_eq}), the solutions are
described as the level curves of the function $F(r,n)$ on
the phase space $(r,n)$.
These curves can be also obtained by integrating the ordinary differential equation,
\begin{equation}
\llaabel{master_ode}
\frac{d}{d\lambda}\left(\begin{array}{c} r\\ n \end{array}\right)
=\left(\begin{array}{r} \partial_n \\ -\partial_r \end{array}\right)F(r,n),
\end{equation}
as orbits with a parameter $\lambda$.
This is reformulation of the master equation Eq.~(\ref{master_eq}) in terms of a dynamical system with the RHS being the Hamiltonian vector field with respect to the Hamiltonian $F(r,n)$.
Then, the notion of a {\it critical point} (or stationary point as in a dynamical system) at which the RHS of Eq.~(\ref{master_ode}) vanishes arises
and its conditions are
\begin{subnumcases}
{}
\llaabel{critical-n}
\partial_nF=0\\
\llaabel{critical-r}
\partial_rF=0.
\end{subnumcases}
We define a {\it critical point} $(r_c,n_c)$ of the accretion problem as a point on the phase space $(r,n)$ at which the conditions Eqs.~(\ref{critical-n})-(\ref{critical-r}) are satisfied.

\subsubsection{Types of critical points}
\llaabel{sec:types}
The linearization of Eq.~(\ref{master_ode}) around a critical point allows us to classify the critical point into two types.
The one is a saddle point and the another one is an extremum point.
A saddle point is a point, in this case, through which two solution orbits pass.
On the other hand, orbits in vicinity of an extremum point are closed curves around the point.
\par
The linearization matrix $M_c$ is given by
\begin{equation}
M_c:=\left(
	\begin{array}{cc}
	\partial_r\partial_n & \partial_n^2 \\
	-\partial_r^2 & -\partial_r\partial_n 
	\end{array}
	\right)F(r_c,n_c).
\end{equation}
This matrix, being real, $2\times 2$ and traceless, has two eigenvalues with opposite signs.
If the determinant of the matrix,
\begin{equation}
\llaabel{eq:det-def}
\det M_c=\left(\partial_r^2F\right)_c\left(\partial_n^2F\right)_c
-\left(\partial_r\partial_nF\right)_c^2,
\end{equation}
where the subscript $c$ denotes the values at $(r_c,n_c)$, is negative (positive), the eigenvalues are real (pure imaginary).
As in a dynamical system, the real eigenvalues imply that the critical point is a saddle point.
For the imaginary eigenvalues, the orbits around the critical point are periodic in linear order.
However, because they are the contours of the real function $F(r,n)$, the orbits must be closed loops.
Therefore the imaginary eigenvalues imply an extremum point.

\subsection{Sonic point}
The sonic point  is the point on the phase space $(r,n)$ at which the
squared 3-velocity $v^2$ of the fluid equals to
the squared sound speed $v_s^2$, or~in other words, Mach number equals
to one (sometimes referred to as {\it sonic surface} because it forms a surface in the physical space).
The sonic point usually coincides with the critical point in accretion problem in a reasonable frame.
Here we introduce the fluid co-rotating frame (FCRF) and observe that
the 3-velocity $v^2$ in the FCRF
gives the coincidence between the sonic point and the critical point.

\subsubsection{Fluid co-rotating frame (FCRF)}
We refer to the fluid co-rotating frame as the frame staying at the same radius but rotating with the same angular velocity $\Omega_f$
of the fluid.
The 4-velocity $u_o$ of an observer at rest in this frame is defined as
\begin{equation}
u_o:=\gamma\left(\partial_t+\Omega_f\partial_\phi\right)
\end{equation}
where $\Omega_f:=u^\phi/u^t$ is the fluid's angular velocity.
We consider the observer on the equatorial plane.
The factor $\gamma$ is determined by the normalization of the 4-velocity, $g_{\mu\nu}u_o^\mu u_o^\nu=-1$, and we have
\begin{equation}
\gamma^{-2}=f-r^2\Omega_f^2.
\end{equation}
\par
The squared 3-velocity $v^2$ of the fluid in the FCRF is given by
\begin{eqnarray}
\frac{1}{1-v^2}=\left(g_{\mu\nu}u^\mu u_o^\nu\right)^2
=1+\frac{\mu^2}{fr^\delta n^2}
\llaabel{3-velocity}
\end{eqnarray}
or
\begin{eqnarray}
v^2=\frac{\mu^2}{\mu^2+fr^\delta n^2},
\end{eqnarray}
where $\delta:=2(D-2)$ and we have used the normalization condition of the 4-velocity, $u^\mu u_\mu = -1$.

\subsubsection{Sonic point and critical point}
\llaabel{subsub:sonic-critical}
Let us calculate explicitly the one of the conditions for the critical point, $\partial_nF(r,n)=0$,
\begin{eqnarray}
0&=&\partial_nF\nonumber\\
&=&\frac{2h^2}{n}\frac{\mu^2}{r^\delta n^2}\left(v_s^2(n)\left[1+f\frac{r^\delta n^2}{\mu^2}\right]-1\right)\frac{1}{1-\omega^2fr^{-2}}.
\end{eqnarray}
Therefore, we
can see that the sound speed $v_s^2(n):=\partial\ln h/\partial \ln n$ can be always written as
\begin{equation}
\llaabel{vs-form}
v_s^2=\frac{\mu^2}{\mu^2+fr^\delta n^2}
\end{equation}
on the point $(r,n)$ satisfying the condition $\partial_nF(r,n)=0$ including the critical point.
Conversely, if the Eq.~(\ref{vs-form}) is satisfied, the condition $\partial_nF(r,n)=0$ holds.
Comparing with Eq.~(\ref{3-velocity}), we can say that the critical point is always the sonic point.
The point satisfying $\partial_nF=0$ and $\partial_rF\neq 0$ is not the critical point but the sonic point.
However, a solution orbit $n=n_{sol}(r)$ passing such a point is typically a double-valued function around the point because $dn_{sol}/dr=-\partial_rF/\partial_nF=\pm\infty$ there and we regard such a solution as unphysical in the current paper.
Furthermore, because there does not exist a solution orbit passing extremum points, we can identify the sonic point of physically acceptable transonic flow to the critical point which is a saddle point and the following theorem holds:
\begin{theorem}\llaabel{theorem:sonic-critical}
For a physical transonic accretion flow in our accretion problem, its sonic point coincides with a critical point on
the phase space which is a saddle point.
\end{theorem}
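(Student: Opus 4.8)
The plan is to synthesize the theorem from the three facts established just above the statement: the sonic condition $v^2=v_s^2$ is equivalent to $\partial_nF=0$; a point with $\partial_nF=0$ but $\partial_rF\neq0$ forces every solution orbit through it to be double-valued in $r$, hence unphysical; and a critical point is either a saddle ($\det M_c<0$) or an extremum ($\det M_c>0$), with no solution orbit passing through an extremum.

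First I would make ``physical transonic flow'' precise as a solution orbit, i.e.\ a connected level curve of $F(r,n)$, along which the fluid passes between the subsonic region $v^2<v_s^2$ and the supersonic region $v^2>v_s^2$. Since $v^2-v_s^2$ is continuous along the orbit and changes sign, the orbit must meet the locus $v^2=v_s^2$; any such intersection is by definition a sonic point $(r_*,n_*)$. By the explicit form of $\partial_nF$ displayed above, the sonic condition at $(r_*,n_*)$ is equivalent to $(\partial_nF)(r_*,n_*)=0$.

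Next I would argue that $(\partial_rF)(r_*,n_*)=0$ as well. If not, then along the orbit $dn_{sol}/dr=-\partial_rF/\partial_nF\to\pm\infty$ as $(r,n)\to(r_*,n_*)$, so the orbit has a vertical tangent and is locally a double-valued function of $r$; such a flow was excluded as unphysical. Hence both (\ref{critical-n}) and (\ref{critical-r}) hold at $(r_*,n_*)$, which is therefore a critical point. Finally, by Sec.~\ref{sec:types} this critical point is either a saddle or an extremum; but our transonic flow is a solution orbit passing through $(r_*,n_*)$, and no solution orbit passes through an extremum, so the point must be a saddle. This yields the theorem.

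I expect the only genuinely delicate point to be the exclusion of the extremum case, which rests on the distinction between the level-curve description of the solutions and the integral-curve description of the Hamiltonian system (\ref{master_ode}): one should note that near an extremum the critical level set $F=F(r_*,n_*)$ is locally just the isolated point $(r_*,n_*)$, so that no through-going orbit exists there on which a subsonic--supersonic transition could take place, while the nearby closed loops are in any case double-valued in $r$ and hence already discarded. Everything else is bookkeeping with the formulas for $\partial_nF$ and $dn_{sol}/dr$ already on hand.
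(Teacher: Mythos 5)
Your proposal is correct and follows essentially the same route as the paper: identify the sonic condition with $\partial_nF=0$ via the explicit computation of $\partial_nF$, exclude the case $\partial_rF\neq0$ because the orbit would be double-valued in $r$ (vertical tangent) and hence unphysical, and exclude the extremum case because no through-going solution orbit exists there, leaving a saddle critical point. Your remark that near an extremum the critical level set degenerates to an isolated point is a slightly more explicit justification of the last step than the paper gives, but it is the same argument.
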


\section{Conditions for critical point and its classification}
\llaabel{sec:criticalpoint}
We explicitly calculate the conditions for the critical point
Eqs.~(\ref{critical-n})-(\ref{critical-r}) and its classification
by the sign of the determinant of the matrix
Eq.~(\ref{eq:det-def}) without specifying the EOS of the fluid in the following.
\par
The conditions for the critical point Eqs.~(\ref{critical-n})-(\ref{critical-r}) can be explicitly written as
\begin{eqnarray}
\llaabel{cp-n}
\partial_nF
&=&\frac{2h^2}{n}\frac{\mu^2}{r^\delta n^2}\left(v_s^2\left[1+f\frac{r^\delta n^2}{\mu^2}\right]-1\right)\Omega=0,\\
\llaabel{partialrF-case2}
\partial_rF&=&h^2f\Omega\left[\frac{\left(f\Omega\right)'}{f\Omega}
-\frac{\mu^2}{fr^\delta n^2}\frac{\left(r^\delta\Omega^{-1}\right)'}{r^\delta\Omega^{-1}}\right]=0,
\end{eqnarray}
where the function $\Omega(r)$ is defined by
\begin{equation}
\Omega(r):=\frac{1}{1-\omega^2 fr^{-2}}.
\end{equation}
It should be noted that $\Omega(r)>0$ is always satisfied because of Eq.~(\ref{master_eq}).
\par
We can prove that $\left(r^\delta\Omega^{-1}\right)'\ne0$ and $(f\Omega)'\ne0$ at the critical point (See Appendix.\ref{app:nonzero}) and therefore Eq.~(\ref{partialrF-case2}) is solved for $n^2$ with the help of $\left(r^\delta\Omega^{-1}\right)'\ne0$ and $(f\Omega)'\ne0$.
Then we get the condition for the radius $r_c$ of critical point eliminating the number density $n$ from Eqs.~(\ref{cp-n})-(\ref{partialrF-case2}) as follows.
\begin{eqnarray}
\llaabel{eq:cradius-condition}
\mathcal{F}(r)
&:=&v_s^2\left(\bar{n}(r)\right)\left[1+\frac{\left(r^{\delta}\Omega^{-1}\right)'}{r^{\delta}\Omega^{-1}}\frac{f\Omega}{\left(f\Omega\right)'}\right]-1=0,\\
\bar{n}(r)
&:=&\frac{\left|\mu\right|}{r^{\delta}\Omega^{-1}}\sqrt{\frac{\left(r^{\delta}\Omega^{-1}\right)'}{\left(f\Omega\right)'}}
\end{eqnarray}
Once the radius $r_c$ is obtained, the number density $n_c$ of the critical point is uniquely determined by
\begin{equation}
\llaabel{eq:cnumber-condition}
n_c=\bar{n}(r_c).
\end{equation}
\par
The determinant of the matrix $M_c$ can be expressed in terms of the function $\mathcal{F}(r)$ defined above:
\begin{equation}
\det M_c=-\frac{\left(r^{\delta}\Omega^{-1}\right)_c}{\left(r^{\delta}\Omega^{-1}\right)'_c}\frac{4h_c^4}{n_c^2}\left((f\Omega)'_c\right)^2\mathcal{F}'(r_c)
\end{equation}
Therefore, using the fact that $(r^\delta \Omega^{-1})'_c$ and $(f\Omega)'_c$ have the same sign, we can uniquely classify the critical point through the sign of the derivative of the function $\mathcal{F}'$ and the factor $(f\Omega)'$ at the critical radius:
\begin{equation}
\llaabel{rot-classification}
saddle\ (extremum)\ point \Leftrightarrow \left(f\Omega\right)'_c{\mathcal{F}}'(r_c)>0\ (<0)
\end{equation}

\section{Correspondence between sonic point and photon sphere for photon gas accretion}
\llaabel{sec:radiation}
In this section, we analyze the critical point in the case of ideal photon gas
and, as the main result of the current paper, prove the following theorem about the correspondence between the sonic points and the photon spheres.
\begin{theorem}\llaabel{theorem:correspondence}
For a physical transonic accretion flow of ideal photon gas fluid of our accretion problem Eq.~(\ref{master_eq}), its sonic point(s) must be on (one of) the unstable photon sphere(s) of the spacetime Eq.~(\ref{eq:metric}).
\end{theorem}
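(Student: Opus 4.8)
The plan is to specialise the critical-point analysis of Sec.~\ref{sec:criticalpoint} to the ideal photon gas. The decisive simplification is that for this EOS the sound speed $v_s^2(n)=\partial\ln h/\partial\ln n$ is the constant $1/(D-1)$, independent of $n$ and $r$; this follows, as in paper~I~\cite{koga}, from the radiation EOS together with $dh=n^{-1}dp$ at the constant entropy established in Sec.~\ref{sec:construction}. Consequently the factor $v_s^2(\bar n(r))$ in the critical-radius condition $\mathcal{F}(r)=0$ of Eq.~(\ref{eq:cradius-condition}) carries no $r$-dependence, and that equation reduces to a condition on the metric function $f$ and the areal radius $r$ alone.

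The core of the argument is to rewrite $\mathcal{F}$ in terms of the single combination $P(r):=f(r)/r^2$, exploiting that $\Omega^{-1}=1-\omega^2 P$ is itself a function of $P$. A short computation gives $r^{\delta}\Omega^{-1}=r^{\delta}(1-\omega^2P)$ and $f\Omega=r^2P/(1-\omega^2P)$, hence $\bigl(r^{\delta}\Omega^{-1}\bigr)/(f\Omega)^{D-2}=\phi(P)$ with $\phi(P):=(1-\omega^2P)^{D-1}/P^{D-2}$. Substituting into Eq.~(\ref{eq:cradius-condition}) with $v_s^2=1/(D-1)$ I expect the compact form
\begin{equation}
\mathcal{F}(r)=\frac{1}{D-1}\,\frac{f\Omega}{(f\Omega)'}\,\frac{\phi'(P)}{\phi(P)}\,P'(r).
\end{equation}
Since $f\Omega>0$ throughout and $(f\Omega)'\neq0$ at the critical point (Appendix~\ref{app:nonzero}), $\mathcal{F}(r_c)=0$ forces $P'(r_c)=0$ or $\phi'(P_c)=0$. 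The second alternative is then ruled out: $d\ln\phi/dP=-(D-1)\omega^2/(1-\omega^2P)-(D-2)/P<0$ for every admissible value (recall $0\le\omega^2P<1$ from the positivity of $\Omega$ and $D\ge3$), so $\phi'(P)$ never vanishes. Hence $P'(r_c)=0$, i.e.\ $\bigl(f/r^2\bigr)'(r_c)=0$, which by the photon-sphere conditions recalled from paper~I~\cite{koga} says precisely that $r_c$ is the areal radius of a photon sphere.

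To establish that this photon sphere is unstable I would use that a physical transonic flow has, by Theorem~\ref{theorem:sonic-critical}, a sonic point that is a saddle critical point, so the classification Eq.~(\ref{rot-classification}) gives $(f\Omega)'_c\,\mathcal{F}'(r_c)>0$. Differentiating the displayed formula for $\mathcal{F}$ at $r_c$ and using $P'(r_c)=0$ annihilates every term except the one proportional to $P''(r_c)$, leaving $(f\Omega)'_c\,\mathcal{F}'(r_c)=\frac{1}{D-1}(f\Omega)_c\,\bigl(\phi'(P_c)/\phi(P_c)\bigr)\,P''(r_c)$. Because $\frac{1}{D-1}(f\Omega)_c>0$ and $\phi'(P_c)/\phi(P_c)<0$, the saddle condition is equivalent to $P''(r_c)<0$, i.e.\ $\bigl(f/r^2\bigr)''(r_c)<0$, which is exactly paper~I's criterion for the photon sphere at $r_c$ to be unstable; combining the two halves proves Theorem~\ref{theorem:correspondence}. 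I anticipate the only real obstacle to be the algebraic bookkeeping of the middle step — obtaining the compact form of $\mathcal{F}$ and verifying that $\phi'(P)$ cannot vanish — since once these are in place both the location (a photon sphere) and the stability type (unstable) follow almost mechanically from results already proved; it is also worth checking the edge cases $\omega=0$ and $D=3$, where the strict negativity of $d\ln\phi/dP$ still holds and the argument goes through unchanged.
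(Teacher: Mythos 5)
Your proposal is correct and follows essentially the same route as the paper: specialize the general critical-point conditions to constant $v_s^2=1/(D-1)$, reduce $\mathcal{F}(r_c)=0$ to $(fr^{-2})'=0$ via a manifestly sign-definite prefactor, identify saddle points with $(fr^{-2})''<0$, and then invoke the photon-sphere lemma from paper~I together with Theorem~\ref{theorem:sonic-critical}. Your factorization $\mathcal{F}=\frac{1}{D-1}\frac{f\Omega}{(f\Omega)'}\frac{d}{dr}\ln\phi(P)$ with $\phi(P)=(1-\omega^2P)^{D-1}/P^{D-2}$ is just a tidier packaging of the paper's explicit prefactor $-\frac{1}{\delta+2}\frac{\delta+2\omega^2 fr^{-2}}{fr^{-2}(1-\omega^2 fr^{-2})}$, whose negativity is exactly your statement $\phi'(P)/\phi(P)<0$.
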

\par
We derived the EOS of ideal photon gas in arbitrary spatial dimensions $d$ in paper I~\cite{koga}.
For our purpose, it is sufficient to know that the enthalpy can be written in the form,
\begin{equation}
h(n)=(const.)\times n^{\gamma-1},
\end{equation}
where the index $\gamma$ is related to the dimension by $\gamma=(d+1)/d$.
The sound speed $v_s^2(n)$ is then computed as
\begin{equation}
\llaabel{vs-radiation}
v_s^{2}(n):=\frac{\partial \ln h}{\partial \ln n}=\gamma-1=\frac{1}{d}=\frac{1}{D-1}.
\end{equation}

\subsection{Critical point}
For the conditions for the critical point $(r_c,n_c)$ and its classification for the ideal photon gas accretion, we have the following lemma.
\begin{lemma}\llaabel{lemma:photongascritical}
For the accretion of ideal photon gas in our accretion problem, radius $r_c$ of a critical point
is specified by
\begin{equation}
\llaabel{cradius-radiation}
(fr^{-2})'=0
\end{equation}
and the corresponding critical density $n_c$ is
\begin{equation}
n_c=\left|\mu\right|\sqrt{\frac{\delta}{r_c^{\delta+1}f'_c}}.
\end{equation}
The type of the critical point is classified by the equation
\begin{equation}
\llaabel{rot-classification-radiation}
saddle\ (extremum)\ point \Leftrightarrow \left(fr^{-2}\right)''_{r=r_c}<0\ (>0).
\end{equation}
\end{lemma}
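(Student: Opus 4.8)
The plan is to exploit the special feature of the ideal-photon-gas EOS recorded in Eq.~(\ref{vs-radiation}): the sound speed $v_s^2=1/(D-1)$ is a positive constant, independent of $n$. Hence in Eq.~(\ref{eq:cradius-condition}) the factor $v_s^2(\bar n(r))$ is simply that constant and $\mathcal{F}$ becomes a function of $r$ alone, built only from the metric function $f(r)$ and the constant $\omega$. Abbreviating $a(r):=\bigl(\ln(r^\delta\Omega^{-1})\bigr)'$ and $b(r):=\bigl(\ln(f\Omega)\bigr)'$, the bracket in Eq.~(\ref{eq:cradius-condition}) equals $1+a/b$, so that
\begin{equation}
\mathcal{F}(r)=\frac{1}{D-1}\left(1+\frac{a}{b}\right)-1=\frac{a-(D-2)b}{(D-1)b}.
\end{equation}
Since $(r^\delta\Omega^{-1})'$ and $(f\Omega)'$ do not vanish at a critical point, $b\neq0$ there, and the critical radius $r_c$ is characterized by $a(r_c)-(D-2)b(r_c)=0$.

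First I would carry out the algebraic reduction of $a-(D-2)b$. Writing $a=\delta/r-(\ln\Omega)'$ and $b=(\ln f)'+(\ln\Omega)'$, using $\delta=2(D-2)$ and the identity $(\ln f)'-2/r=\bigl(\ln(fr^{-2})\bigr)'=r^2(fr^{-2})'/f$, one finds
\begin{equation}
a-(D-2)b=-(D-2)\frac{r^2(fr^{-2})'}{f}-(D-1)(\ln\Omega)'.
\end{equation}
The decisive step is to insert the explicit form $\Omega(r)=(1-\omega^2fr^{-2})^{-1}$, which yields $(\ln\Omega)'=\omega^2(fr^{-2})'\Omega$, so that $(fr^{-2})'$ factors out completely:
\begin{equation}
a-(D-2)b=-(fr^{-2})'\left[(D-2)\frac{r^2}{f}+(D-1)\omega^2\Omega\right].
\end{equation}
Because $D\geq3$, $f>0$ and $\Omega>0$, the bracket is strictly positive, so $a(r_c)-(D-2)b(r_c)=0$ holds if and only if $(fr^{-2})'(r_c)=0$, which is Eq.~(\ref{cradius-radiation}). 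I expect this factorization --- the cancellation that produces a manifestly positive bracket --- to be the main obstacle; the remaining steps are bookkeeping.

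Next I would evaluate the remaining quantities at such an $r_c$. There $(\ln\Omega)'(r_c)=0$ (since the factor $(fr^{-2})'$ vanishes), so $(r^\delta\Omega^{-1})'_c=\delta r_c^{\delta-1}\Omega_c^{-1}$ and $(f\Omega)'_c=f'_c\Omega_c$; moreover $r_cf'_c=2f_c$ forces $f'_c=2f_c/r_c>0$, so the expression under the square root in $\bar n(r)$ at $r=r_c$ is positive and the critical density is well defined and positive. Substituting into $\bar n(r_c)$ and cancelling the factors of $\Omega_c$ gives $n_c=|\mu|\sqrt{\delta/(r_c^{\delta+1}f'_c)}$, as claimed.

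Finally, for the classification I would invoke Eq.~(\ref{rot-classification}): since $(f\Omega)'_c=f'_c\Omega_c>0$, the type of the critical point is governed by the sign of $\mathcal{F}'(r_c)$ alone. The factorization above exhibits $\mathcal{F}(r)=(fr^{-2})'(r)\,G(r)$ with $G(r)=-\bigl[(D-2)r^2/f+(D-1)\omega^2\Omega\bigr]/\bigl[(D-1)b\bigr]$, and since $(fr^{-2})'(r_c)=0$ the product rule gives $\mathcal{F}'(r_c)=(fr^{-2})''(r_c)\,G(r_c)$. At $r_c$ we have $b_c=f'_c/f_c>0$, hence $G(r_c)<0$, so $\mathcal{F}'(r_c)>0$ if and only if $(fr^{-2})''(r_c)<0$. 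Combined with Eq.~(\ref{rot-classification}) this is precisely Eq.~(\ref{rot-classification-radiation}), which completes the argument.
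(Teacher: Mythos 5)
Your proposal is correct and follows essentially the same route as the paper: substituting the constant sound speed $v_s^2=1/(D-1)$ into $\mathcal{F}$, factoring out $(fr^{-2})'$ against a manifestly sign-definite remainder, and then using $\Omega'(r_c)=0$, $f'_c=2f_c/r_c$ and the product rule for $\mathcal{F}'(r_c)$ to obtain $n_c$ and the classification. The factorization you single out as the key step is exactly the one the paper performs (its prefactor $\frac{\delta+2\omega^2fr^{-2}}{fr^{-2}(1-\omega^2fr^{-2})}$ is algebraically identical to your bracket), so no further commentary is needed.
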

\begin{proof}
The critical radius $r_c$ is determined by Eq.~(\ref{eq:cradius-condition}).
Substituting the sound speed of radiation fluid into Eq.~(\ref{vs-radiation}), the condition for the critical radius $r_c$ is given by
\begin{equation}
\mathcal{F}=-\frac{1}{\delta+2}\frac{f\Omega}{(f\Omega)'}\frac{\delta+2\omega^2fr^{-2}}{fr^{-2}\left(1-\omega^2fr^{-2}\right)}\left(fr^{-2}\right)'=0.
\end{equation}
As mentioned above Eq.~(\ref{eq:cradius-condition}), $(f\Omega)'\ne0$ is always satisfied at critical point.
Therefore, the critical radius is obtained by solving the equation
\begin{equation}
\left(fr^{-2}\right)'=0.
\end{equation}
Once the radius $r_c$ is obtained, we get the corresponding number density $n_c$ from Eq.~(\ref{eq:cnumber-condition}),
\begin{equation}
n_c=\left|\mu\right|\sqrt{\frac{\delta}{r_c^{\delta+1}f'_c}},
\end{equation}
where we used the facts $\Omega'(r_c)=0$ and $f_c'=2f_c/r_c$.
This expression is also independent of the parameter $\omega$.
\par
From Eq.~(\ref{rot-classification}) and $(f\Omega)'_c=(f'\Omega)_c>0$, the types of the critical point $(r_c,n_c)$ are determined by the sign of $\mathcal{F}'(r_c)$.
Using the fact that $\Omega'(r_c)=0$ and
$f_c'=2f_c/r_c$, we can explicitly write
${\mathcal{F}}'(r_c)$ as
\begin{equation}
\mathcal{F}'(r_c)=-\frac{1}{\delta+2}\left[\frac{r}{2}\frac{\delta+2\omega^2fr^{-2}}{fr^{-2}\left(1-\omega^2fr^{-2}\right)}\left(fr^{-2}\right)''\right]_{r=r_c}.
\end{equation}
Then we get the explicit form of Eq.~(\ref{rot-classification}) for radiation fluid accretion:
\begin{equation}
saddle\ (extremum)\ point \Leftrightarrow \left(fr^{-2}\right)''_{r=r_c}<0\ (>0)
\end{equation}
\end{proof}
\par
It should be noted that the conditions do not depend on the parameter of
the rotation $\omega$ and thereby coincides with the condition in the
case of spherical flows.

\subsection{Proof of Theorem: Correspondence among the points}
\llaabel{sec:correspondence}
In the following, we see the correspondence among the three objects; the photon sphere, the critical point and the sonic point of our accretion problem with fluid of ideal photon gas and finally prove Theorem~\ref{theorem:correspondence}.
\par
In paper I~\cite{koga}, we derived the following lemma about the conditions for photon spheres of the spacetime Eq.~(\ref{eq:metric}):
\begin{lemma}\llaabel{lemma:photonsphere}
The radius photon sphere is specified by the equation,
\begin{equation}
\llaabel{eq:pp-radius}
(fr^{-2})'=0.
\end{equation}
The stability condition of the photon sphere is given by
\begin{equation}
\llaabel{eq:pp-stability}
stable\ (unstable) \Leftrightarrow (fr^{-2})''>0 \ (<0)
\end{equation}
at the radius of the photon sphere.
\end{lemma}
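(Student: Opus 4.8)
The plan is to reduce the problem to null geodesics confined to the equatorial plane and then recast their radial motion as a one-dimensional effective-potential problem, exactly as in the standard four-dimensional treatment. First I would invoke the $SO(D-1)$ spherical symmetry of the $(D-2)$-sphere to argue that any null geodesic lies in a totally geodesic $2$-surface, which we may take to be the equatorial plane $\theta_i=\pi/2$ with azimuthal motion only; this is consistent with the reflection symmetry already assumed for the disk. On this plane the metric reduces to $ds^2=-f(r)dt^2+g(r)dr^2+r^2d\phi^2$, and the two Killing vectors $\xi_{(t)}=\partial_t$ and $\xi_{(\phi)}=\partial_\phi$ supply the conserved energy $E:=f\dot{t}$ and angular momentum $L:=r^2\dot{\phi}$, where the dot denotes the derivative with respect to an affine parameter.

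Next I would impose the null condition $g_{\mu\nu}\dot{x}^\mu\dot{x}^\nu=0$, which after eliminating $\dot{t}$ and $\dot{\phi}$ in favour of $E$ and $L$ becomes the radial equation
\begin{equation}
\dot{r}^2=W(r),\qquad W(r):=\frac{1}{fg}\left(E^2-L^2\,\frac{f}{r^2}\right).
\end{equation}
A circular null orbit at $r=r_{ph}$ requires $\dot{r}=0$ and $\ddot{r}=0$, equivalently the double condition $W(r_{ph})=0$ and $W'(r_{ph})=0$. The first gives $E^2/L^2=f/r^2$; combining it with the second and using $f,g>0$ eliminates $E$ and $L$ and yields $\left(f/r^2\right)'=0$, i.e. $rf'-2f=0$. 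Since $(fr^{-2})'=r^{-3}(rf'-2f)$, this is precisely Eq.~(\ref{eq:pp-radius}), and it is worth noting that the dimension $D$ and the metric function $g$ drop out entirely.

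For the stability criterion I would linearize the radial equation about $r_{ph}$. Writing $r=r_{ph}+\epsilon$ and using $W(r_{ph})=W'(r_{ph})=0$, the radial equation gives $\ddot{\epsilon}=\frac{1}{2}W''(r_{ph})\,\epsilon$ at leading order, so the orbit is stable (oscillatory) for $W''(r_{ph})<0$ and unstable (exponential) for $W''(r_{ph})>0$. The key computation is to evaluate $W''$ at the photon sphere: because both $E^2-L^2f/r^2$ and its first derivative $-L^2(f/r^2)'$ vanish there, only the term in which $1/(fg)$ multiplies the second derivative survives, giving
\begin{equation}
W''(r_{ph})=-\frac{L^2}{fg}\left(\frac{f}{r^2}\right)''\Bigg|_{r=r_{ph}}.
\end{equation}
Since $L^2>0$ and $f,g>0$, the sign of $W''(r_{ph})$ is opposite to that of $(fr^{-2})''$, which immediately gives the stability dictionary Eq.~(\ref{eq:pp-stability}).

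The step I expect to require the most care is the reduction to the equatorial plane in general $D$: one must justify that the $(D-2)$-sphere angular motion can always be rotated to a single great circle, so that the effective angular term is exactly $r^2\dot{\phi}^2$ independently of $D$. Once this is established the remaining algebra is routine, and the clean feature worth emphasizing is that neither the location Eq.~(\ref{eq:pp-radius}) nor the stability sign Eq.~(\ref{eq:pp-stability}) depends on $g(r)$ or on the spacetime dimension, because $f$ and $g$ enter only through strictly positive prefactors.
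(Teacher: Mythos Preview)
Your argument is correct. The effective-potential reduction on the equatorial plane is the standard route, and your computation of $W''(r_{ph})$ using $B=0$, $B'=0$ at the photon radius cleanly isolates the sign of $(fr^{-2})''$. One small remark: the step $\ddot r=\tfrac12 W'(r)$ is obtained by differentiating $\dot r^2=W(r)$ and dividing by $\dot r$, so strictly speaking it requires $\dot r\neq 0$; since the circular orbit has $\dot r\equiv 0$, you should either invoke continuity or derive $\ddot r=\tfrac12 W'(r)$ directly from the radial geodesic equation. This is cosmetic and does not affect the conclusion.

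As for comparison with the paper: the present paper does \emph{not} prove Lemma~\ref{lemma:photonsphere} at all---it is quoted verbatim from the authors' earlier work (paper~I,~\cite{koga}) and invoked here without argument. So there is no in-paper proof to compare against; your derivation supplies exactly the kind of self-contained argument a reader would want, and it is essentially the standard one that paper~I presumably follows. Your observation that neither the location nor the stability criterion depends on $g(r)$ or on $D$ is worth keeping, as it parallels the paper's remark that the master equation~(\ref{master_eq}) is likewise independent of $g(r)$.
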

\par
The conditions of the critical radius Eq.~(\ref{cradius-radiation}) and its classification Eq.~(\ref{rot-classification-radiation}) coincide with that of the photon sphere Eqs.~(\ref{eq:pp-radius}) and~(\ref{eq:pp-stability}), respectively.
Then we immediately obtain the following corollary about the correspondence between photon spheres and critical points of ideal photon gas accretion in our accretion problem from Lemma~\ref{lemma:photongascritical} and~\ref{lemma:photonsphere}:
\par
\begin{corollary}
If the spacetime has photon spheres, there exists a critical point at the same radius for each of the spheres.
Furthermore, for an unstable photon sphere, the critical point at the same radius is always a saddle point while for a stable one, the corresponding critical point is an extremum point.
\end{corollary}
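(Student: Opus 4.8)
The plan is to obtain the Corollary as a direct comparison of the two characterizations already in hand: Lemma~\ref{lemma:photongascritical}, which locates and classifies the critical points of the ideal photon gas accretion, and Lemma~\ref{lemma:photonsphere}, which locates and classifies the photon spheres of the metric Eq.~(\ref{eq:metric}). No new analytic machinery is needed; the work is to line up the conditions.

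First I would establish the radial coincidence. By Lemma~\ref{lemma:photonsphere} a photon sphere sits at a radius $r_{ph}$ solving Eq.~(\ref{eq:pp-radius}), $(fr^{-2})'=0$, and by Lemma~\ref{lemma:photongascritical} a critical radius of the photon gas flow solves the \emph{same} equation Eq.~(\ref{cradius-radiation}). Hence, given a photon-sphere radius $r_{ph}$, I set $r_c:=r_{ph}$, which satisfies the critical-radius condition. To promote this to an actual critical point of the accretion problem I must check that the density prescribed by Lemma~\ref{lemma:photongascritical},
\[
n_c=|\mu|\sqrt{\frac{\delta}{r_c^{\delta+1}f'_c}},
\]
is well defined and positive; this requires $f'_c>0$, which follows because $(fr^{-2})'=0$ at $r_c$ forces $f'_c=2f_c/r_c$, positive by the standing assumption $0<f<\infty$ (and $\delta=2(D-2)>0$ for $D\ge 3$). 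Since Lemma~\ref{lemma:photongascritical} determines $n_c$ uniquely from $r_c$, exactly one critical point $(r_c,n_c)$ sits at each photon-sphere radius, which is the first assertion of the Corollary.

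Next I would match the classifications by comparing Eq.~(\ref{rot-classification-radiation}) with Eq.~(\ref{eq:pp-stability}): the former declares $(r_c,n_c)$ a saddle (resp.\ extremum) point according as $(fr^{-2})''_{r=r_c}<0$ (resp.\ $>0$), while the latter declares the photon sphere unstable (resp.\ stable) according as $(fr^{-2})''<0$ (resp.\ $>0$) at its radius. Since $r_c=r_{ph}$, these inequalities concern one and the same number, so ``unstable photon sphere $\Leftrightarrow$ saddle critical point'' and ``stable photon sphere $\Leftrightarrow$ extremum critical point'', which is the second assertion. Feeding this into Theorem~\ref{theorem:sonic-critical}, which identifies the sonic point of a physical transonic flow with a saddle critical point, then yields the paper's main Theorem~\ref{theorem:correspondence}: such a sonic point must lie on an unstable photon sphere.

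The main obstacle is organizational rather than computational, since the analytic content was already packaged into Lemmas~\ref{lemma:photongascritical} and~\ref{lemma:photonsphere}. The one genuine subtlety I would flag is the non-generic borderline case $(fr^{-2})''_{r=r_c}=0$, which neither dichotomy covers: there the linearization matrix $M_c$ is degenerate, the saddle/extremum classification fails at linear order, and the stability classification of the photon sphere is likewise inconclusive. I would therefore state the Corollary, as done here, only for the two non-degenerate alternatives, and note that resolving the borderline case would require a higher-order analysis of $\mathcal{F}$ near $r_c$.
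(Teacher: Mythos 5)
Your proposal is correct and follows essentially the same route as the paper, which likewise obtains the Corollary by directly observing that Eq.~(\ref{cradius-radiation}) coincides with Eq.~(\ref{eq:pp-radius}) and Eq.~(\ref{rot-classification-radiation}) with Eq.~(\ref{eq:pp-stability}). Your added checks (positivity of $n_c$ via $f'_c=2f_c/r_c>0$, and the exclusion of the degenerate case $(fr^{-2})''_{r=r_c}=0$) are sound and consistent with facts the paper itself uses.
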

\par
Note that if a critical point exists, there must be a photon sphere at the same radius and the extremum(saddle) point corresponds to the stable(unstable) sphere.
There is a one-to-one correspondence between critical points and photon spheres.
It is worth noting that if the spacetime has more than one photon spheres, the stable and unstable spheres appear alternately as we can see from Eqs.~(\ref{eq:pp-radius})-(\ref{eq:pp-stability}).
The fact also leads to the alternate appearance of the corresponding extremum and saddle points on the phase space $(r,n)$.
\par
As mentioned in Theorem~\ref{theorem:sonic-critical} in Sec.~\ref{subsub:sonic-critical}, the sonic point of the physical transonic accretion flow coincides with a critical and saddle point.
Then Theorem~\ref{theorem:sonic-critical} and Corollary above immediately prove Theorem~\ref{theorem:correspondence}.
\par
Even if the accretion fluid flow is rotational, there exists a correspondence between the photon sphere and the sonic point of the radiation fluid accretion as far as the fluid rotates around the center on the equatorial plane and satisfies our disk conditions.

\section{Conclusion}
\llaabel{sec:conclusion}
We formulated the rotational accretion problem of the disk lying on the equatorial plane of 
the $D$-dimensional static, spherically symmetric spacetime.
We adopted the simplest accretion disk model similar to the one given by Abraham et al.~\cite{abraham}.
After giving the definition of a critical point and observing its relation to the sonic point of a transonic accretion flow, we showed the explicit form of the conditions of the critical point and its classification for the perfect fluid with arbitrary EOS.
\par
Applying the EOS of radiation fluid to the analysis, we proved the existence of the correspondence between the sonic points of rotational ideal photon gas accretion and the photon spheres of the spacetime.
We showed, at first, that a critical point that is a saddle point is
always on the unstable photon sphere while a critical point that
is an extremum point is always on the stable photon sphere.
Then, from the relation between the critical point
and the sonic point, we proved a correspondence between the photon sphere and the sonic point of radiation fluid in the rotational case.
Then we proved that the physical transonic accretion flow must have its
sonic point on (one of) the unstable photon sphere(s) of the spacetime.
The result holds in arbitrary dimensions of the spacetime as in
the case of spherical flows shown in paper I~\cite{koga}.
\par
We proved the correspondence in our idealized disk model, where
the critical and the sonic points are directly related.
However, there are many other possibilities about disk configurations such as disk thickness and vertical equilibrium and it is known that the Mach number can deviate from one on the critical point depending on the disk models~\cite{das}~\cite{barai}.
Therefore it is not so clear whether the correspondence exists between the sonic point and photon sphere, or, between the critical point and photon sphere in more realistic models, e.g., in the model by Abramowicz et al.~\cite{abramowicz}.
\par
If we focus on a spherical star in vacuum with general relativity, its exterior geometry is described by the Schwarzschild spacetime.
As considered in astrophysical cases, such a star typically has a radius
greater than $3M$ for the mass $M$ and does not have a photon sphere.
Our result says that the star of radius greater than $3M$
cannot have transonic accretion of radiation fluid as far as 
it meets the conditions in the
current paper or paper I~\cite{koga},
that is, rotating geometrically thin accretion disks or 
spherically symmetric accretion flows.
\par
We have extended the correspondence from 
spherically symmetric flows~\cite{koga} to rotating geometrically 
thin accretion disks.
This result, more and more, motivates us to consider its physical
origin, that is to say, the connection between the behavior of free photons on the geometry and 
behavior of photon gas.
This is also attractive as one of problems concerning hydrodynamics on curved spacetime.
\par
It is also interesting to investigate a correspondence in a spacetime with a rotating object.
The central object in such a spacetime has angular momentum and there is
preferred direction of rotation for the motion of matter.
The radius of the circular orbits of particles depends on the direction of rotation and also that of the sonic point does.
The physics becomes more complicated but more interesting 
in that situation.

\begin{acknowledgments}
The authors thank T. Igata and K. Ogasawara for their very helpful discussions.
This work was partially supported by JSPS KAKENHI Grant No. JP26400282
(T.H.).
\end{acknowledgments}

\appendix
\section{Nonzero terms on the critical point}
\llaabel{app:nonzero}
Here we show that $\left(r^\delta\Omega^{-1}\right)'\ne0$ and $(f\Omega)'\ne0$ at the critical point.
Consider the case in which the two conditions, Eq.~(\ref{partialrF-case2}) and
\begin{equation}
\llaabel{assumption-case2}
\left(r^\delta\Omega^{-1}\right)'=0,
\end{equation}
are satisfied, simultaneously.
Clearly, this is equivalent to the conditions,
\begin{subnumcases}
{}
\llaabel{singular-condition-1}
\left(r^\delta\Omega^{-1}\right)'=0\\
\left(f\Omega\right)'=0.
\end{subnumcases}
Eliminating $\Omega$ from the above two equations, we get
\begin{equation}
\llaabel{cradius-case2}
(fr^{\delta})'=0.
\end{equation}
This condition solely determines the critical radius $r_c$.
Once the radius $r_c$ is obtained, we can specify the parameter $\omega^2$ from the above equations.
From Eq.~(\ref{cradius-case2}), we find the expression,
\begin{equation}
(fr^{-2})'=(fr^{\delta-(2+\delta)})'=fr^{\delta}(r^{-(2+\delta)})'=-(2+\delta)fr^{-3},
\end{equation}
for $r=r_c$.
Substituting the result into Eq.~(\ref{singular-condition-1}), we get, at $r=r_c$,
\begin{eqnarray}
0
=\left(r^\delta\Omega^{-1}\right)'
=r^{\delta-1}\left(\delta+2\omega^2fr^{-2}\right).
\end{eqnarray}
By definition, both the terms in the bracket must be positive and we do not have the parameter $\omega$ satisfying the condition.
Therefore the conditions for the critical point are always accompanied by the additional conditions $\left(r^\delta\Omega^{-1}\right)'\ne0$ and $(f\Omega)'\ne0$.
\par
Note that the above proof relies on the only one of the critical conditions Eq.~(\ref{partialrF-case2}), $\partial_rF(r,n)=0$.
This implies that the conditions $\left(r^\delta\Omega^{-1}\right)'\ne0$ and $(f\Omega)'\ne0$ hold anywhere on the curve $\Gamma$ defined by the equation $\partial_rF(r,n)=0$ and the curve $\Gamma$ can be expressed by $n=\bar{n}(r)$.

%
\bibliography{cp_pp_correspondence_rot}

\end{document}